\newtheorem{theorem}{Theorem}[section]
\newtheorem{lemma}{Lemma}[section]
\newtheorem{corollary}{Corollary}[section]
\newtheorem{proposition}{Proposition}[section]
\theoremstyle{definition}
\newtheorem{definition}{Definition}[section]
\newtheorem{remark}{Remark}[section]
\numberwithin{table}{section}
\numberwithin{equation}{section}
\numberwithin{remark}{section}
\long\def\symbolfootnote[#1]#2{\begingroup%
 \def\thefootnote{\fnsymbol{footnote}}\footnote[#1]{#2}\endgroup}
\begin{document}

\title[Reversals of Least-Squares Estimates]{Reversals of Least-Squares Estimates and Model-Independent Estimation for Directions of Unique Effects}
\author{Brian Knaeble}
\address{Department of MSCS, University of Wisconsin - Stout}
\email{knaebleb@uwstout.edu}

\author{Seth Dutter}
\address{Department of MSCS, University of Wisconsin - Stout}
\email{dutters@uwstout.edu}

\date{March 4, 2015}

\keywords{regression, confounding, sensitivity analysis}
\subjclass[2010]{Primary 62-07; Secondary 62J99, 93B35}

\begin{abstract}
When a linear model is adjusted to control for additional explanatory variables the sign of a fitted coefficient may reverse.  Here these reversals are studied using coefficients of determination.  The resulting theory can be used to determine directions of unique effects in the presence of substantial model uncertainty.  This process is called model-independent estimation when the estimates are invariant across changes to the model structure.  When a single covariate is added, the reversal region can be understood geometrically as an elliptical cone of two nappes with an axis of symmetry relating to a best-possible condition for a reversal using a single coefficient of determination.  When a set of covariates are added to a model with a single explanatory variable, model-independent estimation can be implemented using subject matter knowledge.  More general theory with partial coefficients is applicable to analysis of large data sets.  Applications are demonstrated with dietary health data from the United Nations.  Necessary conditions for Simpson's paradox are derived.
\end{abstract}

\maketitle

\section{Introduction}
A multivariate statistical model may be useful for predicting values of some variables from values of other variables for individuals throughout a population of study, yet the same model may be inaccurate when used to estimate the effects of experimental manipulation on the same individuals.  For example, standardized test scores of students can be predicted using information about school type, but effects of transferring students from one school to another may be difficult to ascertain.  In general, models may suggest effects that are confounded by a set of lurking variables.  \citet{Pearl09a} gives a causal definition for confounding in his book {\it Causality}, in contrast to definitions based on associational criteria used by ``epidemiologists, biostatisticians, social scientists, and economists.''  \citet{Greenland01} observe how in health research the term confounding has been used to refer to at least four distinct concepts---bias in estimating causal effects, noncollapsibility, inseperability of main effects and interactions, and %a measurement problem due to 
inherent differences between variables measured and underlying constructs of interest.  Here we use the term confounding to refer to bias in estimating causal effects.  For further reading on confounding and related topics see \citet{Rosenbaum83a}, \citet{McNamee03}, and \citet{Howards12}.

Concerns about confounding lead to discussion of statistical adjustment.
\citet[Chapter 4]{Cox58} defines a concamitant variable through discussion of concamitant observations, which are supplementary observations (on a supplementary variable) that may be used to increase precision (of treatment effect estimates).  He describes how to adjust results for what ``would have been obtained had it been possible to make the concamitant variable the same for all (individuals).''   When fixed, the concamitant variable can not be responsible for observed variation in the outcome variable.  Adjustment is thus a way to mimic experimental control, and through adjustment researchers may say that they have controlled for a confounding variable.  Here we use the term covariate when referring to any variable that may be controlled for to facilitate adjustment.  
For more reading on adjustment methods that control for confounding see \citet{Lu09}.

Controlling for too many variables can be problematic \citep{Chatfield95,Hawkins04}, and controlling for certain types of variables can increase bias \citep{Robins86,Weinberg93,Scarborough10,Myers11,Pearl11}.  When subject matter specialists agree on the structure of a causal diagram it can be used to select an admissible set of covariates for adjustment \citep{Pearl09b}.  \citet{McNamee04} gives general advice for selecting a model, suggesting that both subject matter knowledge and statistical information should be used.  This was the approach taken by \citet{Davis12} during their study of the effect of rice consumption on (internal) exposure to arsenic in children.  They analyzed data, not only for ``rice consumption'' and ``urinary arsenic concentration'' (an indicator of recent exposure to arsenic), but also for ``age'', ``body mass index'', ``water source'', and other covariates.  They controlled for three different subsets of covariates by fitting three different multiple regression models, and within each model they interpreted the fitted coefficient for ``rice consumption'' as an adjusted estimate for the unique effect of rice consumption on arsenic exposure.  All three adjusted estimates were found to be statistically significant, yet the authors concluded that their study only ``suggests that rice consumption is a potential source of arsenic exposure.''  The authors displayed an awareness of what \citet{Chatfield95} calls model uncertainty.  For additional examples of regression in the presence of model uncertainty see \citet{Jungert12}, \citet{Nelson13}, \citet{Cervellati12}, and \citet{Lignell13}.  

We have discussed terminology and established context in order to state the central idea of this paper---if some aspect of an uncertain model is shown to be insensitive to adjustment by control for any subset of a larger set of covariates, and if all confounding variables are known to be within that larger set of covariates, then causal interpretation is more acceptable than otherwise.  In this way causal conclusions may be obtained with a combination of subject matter knowledge and sensitivity analysis.  We thus seek to develop useful mathematics that facilitates such sensitivity analysis.  We adopt the general context of linear regression, and we assume the principle of least squares.  Our objective is to identify simple conditions that can be used to ensure that estimates for directions of unique effects are invariant across many different model extensions.  The general process of using these conditions for the purpose of estimation is called {\it model-independent estimation}, and the mathematics associated with directions of effects is referred to as {\it analysis of reversals}.  The main results are presented in Section \ref{resultssec}.  Proofs are in Section \ref{proofsec}.  Necessary conditions for Simpson's paradox are derived in Section \ref{simpsec}.  Applications are demonstrated in Section \ref{miesec}.  Further discussion occurs in Section \ref{dsec}.

\section{Results}
\label{resultssec}

Let $\mathbf{y}$ denote a matrix with a single column of response data associated with the response variable $Y$.  Let $\mathbf{x}$ denote a matrix with a single column of explanatory data associated with the explanatory variable $X$.  Let $\mathbf{w}=[\mathbf{w}_1,...,\mathbf{w}_p]$ denote a matrix with $p$ columns of covariate data, associated with covariates $W_1,...,W_p$.  Let $\mathbf{u}=[\mathbf{u}_1,...,\mathbf{u}_k]$ denote a matrix with $k$ additional columns of covariate data, associated with covariates $U_1,...,U_k$.  Let $\mathbf{e}$ denote a matrix with a single column of ones.  All matrices have $n$ rows, and each row of $[\mathbf{y}\, \mathbf{x}\, \mathbf{w}\, \mathbf{u}]$ represents a multivariate observation on a single individual.  We refer to the columns of $[\mathbf{e}\, \mathbf{y}\, \mathbf{x}\, \mathbf{w}\, \mathbf{u}]$ as vectors and assume that each subset of vectors is linearly independent and non orthogonal.

The mathematics herein requires notation capable of representing coefficients across multiple models.  Let $\mathbf{m}=[\mathbf{m}_2\,...\,\mathbf{m}_l]$ denote a generic matrix with $l-1$ columns and $n$ rows, and let $\mathbf{m}_1$ and $\mathbf{z}$ denote generic vectors each with $n$ entries.
When $\mathbf{z}$ is regressed onto $[\mathbf{e}\,\mathbf{m}]$ we write $R^2(\mathbf{m},\mathbf{z})$ for the coefficient of determination and $R(\mathbf{m},\mathbf{z})$ for its positive square root.  For any $j$ we write $r(\mathbf{m}_j,\mathbf{z})$ for the correlation between $\mathbf{m}_j$ and $\mathbf{z}$.  We write $\mathbf{z}_{|\mathbf{m}}$ for the residual vector $\mathbf{z}-\hat{\mathbf{z}}(\mathbf{m})$, where $\hat{\mathbf{z}}(\mathbf{m})$ is the vector of fitted values.  A hat vector with a null argument is interpreted as the zero vector.  In place of $[{\mathbf{u}_1}_{|\mathbf{w}}\,...\,{\mathbf{u}_k}_{|\mathbf{w}}]$ we write $\mathbf{u}_{|\mathbf{w}}$.  When $\mathbf{z}$ is regressed onto $[\mathbf{e}\,\mathbf{m}_1\,\mathbf{m}]$ we write $\hat{\beta}_{\mathbf{m}_1|\mathbf{m}}(\mathbf{z})$ for the least squares fitted coefficient of $\mathbf{m}_1$.  Similar notation is used for other least-squares coefficients.  If $\mathbf{m}$ naturally decomposes column-wise we may then express $\mathbf{m}$ as a set of components separated by commas.
\begin{proposition}
\label{firstresult}
A reversal, \[\mathrm{sign}(\hat{\beta}_{\mathbf{x}|\mathbf{w},\mathbf{u}}(\mathbf{y}))\neq \mathrm{sign}(\hat{\beta}_{\mathbf{x}|\mathbf{w}}(\mathbf{y})),\]
occurs if and only if
\begin{equation}
\label{theeq}
\frac{R(\mathbf{u}_{|\mathbf{w}},\mathbf{x}_{|\mathbf{w}})R(\mathbf{u}_{|\mathbf{w}},\mathbf{y}_{|\mathbf{w}})r(\widehat{\mathbf{x}_{|\mathbf{w}}}(\mathbf{u}),\widehat{\mathbf{y}_{|\mathbf{w}}}(\mathbf{u}))}{r(\mathbf{x}_{|\mathbf{w}},\mathbf{y}_{|\mathbf{w}})}>1.
\end{equation}
\end{proposition}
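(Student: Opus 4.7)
The plan is to reduce both fitted coefficients to simple inner products of partial-residual vectors via the Frisch--Waugh--Lovell (FWL) theorem, decompose orthogonally inside the residual space, and finally reinterpret the resulting algebraic condition in terms of $R$ and $r$.

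First, I would invoke FWL to identify $\hat{\beta}_{\mathbf{x}|\mathbf{w}}(\mathbf{y})$ with the no-intercept slope of $\mathbf{y}_{|\mathbf{w}}$ on $\mathbf{x}_{|\mathbf{w}}$, namely $\langle \mathbf{x}_{|\mathbf{w}},\mathbf{y}_{|\mathbf{w}}\rangle/\|\mathbf{x}_{|\mathbf{w}}\|^2$, and likewise identify $\hat{\beta}_{\mathbf{x}|\mathbf{w},\mathbf{u}}(\mathbf{y})$ with a positive multiple of $\langle \mathbf{x}_{|\mathbf{w},\mathbf{u}},\mathbf{y}_{|\mathbf{w},\mathbf{u}}\rangle$. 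Hence the reversal is equivalent to the sign change
\[
\mathrm{sign}\bigl(\langle \mathbf{x}_{|\mathbf{w},\mathbf{u}},\mathbf{y}_{|\mathbf{w},\mathbf{u}}\rangle\bigr)\neq\mathrm{sign}\bigl(\langle \mathbf{x}_{|\mathbf{w}},\mathbf{y}_{|\mathbf{w}}\rangle\bigr).
\]

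Next, I would write the orthogonal decompositions $\mathbf{x}_{|\mathbf{w}}=\widehat{\mathbf{x}_{|\mathbf{w}}}(\mathbf{u})+\mathbf{x}_{|\mathbf{w},\mathbf{u}}$ and $\mathbf{y}_{|\mathbf{w}}=\widehat{\mathbf{y}_{|\mathbf{w}}}(\mathbf{u})+\mathbf{y}_{|\mathbf{w},\mathbf{u}}$ (noting that, since $\mathbf{x}_{|\mathbf{w}}$ is already orthogonal to $[\mathbf{e}\,\mathbf{w}]$, projecting onto $\mathbf{u}$ is the same as projecting onto $\mathbf{u}_{|\mathbf{w}}$, so $\widehat{\mathbf{x}_{|\mathbf{w}}}(\mathbf{u})=\widehat{\mathbf{x}_{|\mathbf{w}}}(\mathbf{u}_{|\mathbf{w}})$). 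Expanding $\langle \mathbf{x}_{|\mathbf{w}},\mathbf{y}_{|\mathbf{w}}\rangle$ using these decompositions, the two cross terms vanish by orthogonality, yielding the key identity
\[
\langle \mathbf{x}_{|\mathbf{w},\mathbf{u}},\mathbf{y}_{|\mathbf{w},\mathbf{u}}\rangle = \langle \mathbf{x}_{|\mathbf{w}},\mathbf{y}_{|\mathbf{w}}\rangle-\langle \widehat{\mathbf{x}_{|\mathbf{w}}}(\mathbf{u}),\widehat{\mathbf{y}_{|\mathbf{w}}}(\mathbf{u})\rangle.
\]
A case analysis on the sign of the nonzero denominator $\langle \mathbf{x}_{|\mathbf{w}},\mathbf{y}_{|\mathbf{w}}\rangle$ then shows that the reversal criterion is equivalent to $\langle \widehat{\mathbf{x}_{|\mathbf{w}}}(\mathbf{u}),\widehat{\mathbf{y}_{|\mathbf{w}}}(\mathbf{u})\rangle/\langle \mathbf{x}_{|\mathbf{w}},\mathbf{y}_{|\mathbf{w}}\rangle>1$.

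Finally, I would match this ratio with the left-hand side of \eqref{theeq}. Because $\mathbf{x}_{|\mathbf{w}}$ and $\mathbf{y}_{|\mathbf{w}}$ are mean-zero, the coefficients of determination reduce to norm ratios, $R(\mathbf{u}_{|\mathbf{w}},\mathbf{x}_{|\mathbf{w}})=\|\widehat{\mathbf{x}_{|\mathbf{w}}}(\mathbf{u})\|/\|\mathbf{x}_{|\mathbf{w}}\|$ and similarly for $\mathbf{y}$; the mean-zero projections also let me write $r(\widehat{\mathbf{x}_{|\mathbf{w}}}(\mathbf{u}),\widehat{\mathbf{y}_{|\mathbf{w}}}(\mathbf{u}))$ and $r(\mathbf{x}_{|\mathbf{w}},\mathbf{y}_{|\mathbf{w}})$ as plain cosines. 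Multiplying out, the four norms cancel and the expression collapses to exactly $\langle \widehat{\mathbf{x}_{|\mathbf{w}}}(\mathbf{u}),\widehat{\mathbf{y}_{|\mathbf{w}}}(\mathbf{u})\rangle/\langle \mathbf{x}_{|\mathbf{w}},\mathbf{y}_{|\mathbf{w}}\rangle$, completing the equivalence.

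The main obstacle I anticipate is bookkeeping rather than a deep step: keeping straight that $R$ is defined to be nonnegative while $r$ carries the sign, and verifying that the sign convention of the ratio works uniformly in both the positive and negative cases of $\langle \mathbf{x}_{|\mathbf{w}},\mathbf{y}_{|\mathbf{w}}\rangle$, so that the single inequality ``$>1$'' captures reversal regardless of baseline direction. A secondary care point is the reduction $\widehat{\mathbf{x}_{|\mathbf{w}}}(\mathbf{u})=\widehat{\mathbf{x}_{|\mathbf{w}}}(\mathbf{u}_{|\mathbf{w}})$, which is what lets me interpret the projections inside the formulas for $R(\mathbf{u}_{|\mathbf{w}},\cdot)$ consistently.
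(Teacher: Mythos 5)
Your proposal is correct and shares the paper's overall skeleton---reduce to the residual space orthogonal to $[\mathbf{e}\,\mathbf{w}]$, establish one key inner-product identity, then cancel norms to convert that identity into the $R$'s and $r$'s of (\ref{theeq})---but it obtains the key identity by a genuinely different computation. The paper's proof orthonormalizes $\mathbf{u}_{|\mathbf{w}}$, writes the normal equations, and evaluates two determinants via Cramer's rule and the Leibniz formula to reach (\ref{closer}); you get the same numerator $\langle\mathbf{x}_{|\mathbf{w}},\mathbf{y}_{|\mathbf{w}}\rangle-\langle\widehat{\mathbf{x}_{|\mathbf{w}}}(\mathbf{u}_{|\mathbf{w}}),\widehat{\mathbf{y}_{|\mathbf{w}}}(\mathbf{u}_{|\mathbf{w}})\rangle$ in one line from the orthogonal (Pythagorean) expansion of $\langle\mathbf{x}_{|\mathbf{w}},\mathbf{y}_{|\mathbf{w}}\rangle$, the cross terms vanishing. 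Your route is shorter and basis-free (no orthonormalization of the covariates is needed); the paper's determinant calculation has the side benefit of exhibiting the positive denominator $\alpha$ explicitly, which you instead absorb into ``a positive multiple'' supplied by the Frisch--Waugh--Lovell theorem. The sign bookkeeping at the end is handled the same way in both arguments.

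One point needs repair. Your parenthetical claim that $\widehat{\mathbf{x}_{|\mathbf{w}}}(\mathbf{u})=\widehat{\mathbf{x}_{|\mathbf{w}}}(\mathbf{u}_{|\mathbf{w}})$ is false in general: $\mathrm{span}(\mathbf{e},\mathbf{u})$ and $\mathrm{span}(\mathbf{e},\mathbf{u}_{|\mathbf{w}})$ are different subspaces when $\mathbf{w}\neq\emptyset$, and orthogonality of $\mathbf{x}_{|\mathbf{w}}$ to $[\mathbf{e}\,\mathbf{w}]$ does not make its projections onto those two subspaces agree. The identity your decomposition actually requires is $\widehat{\mathbf{x}_{|\mathbf{w}}}(\mathbf{w},\mathbf{u})=\widehat{\mathbf{x}_{|\mathbf{w}}}(\mathbf{u}_{|\mathbf{w}})$, which holds because $\mathrm{span}(\mathbf{e},\mathbf{w},\mathbf{u})=\mathrm{span}(\mathbf{e},\mathbf{w})\oplus\mathrm{span}(\mathbf{u}_{|\mathbf{w}})$ is an orthogonal decomposition and $\mathbf{x}_{|\mathbf{w}}$ is orthogonal to the first summand; only then is $\mathbf{x}_{|\mathbf{w}}-\widehat{\mathbf{x}_{|\mathbf{w}}}(\mathbf{u}_{|\mathbf{w}})$ equal to $\mathbf{x}_{|\mathbf{w},\mathbf{u}}$ and the cross terms drop out. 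The paper's statement of (\ref{theeq}) is itself loose on this point (it writes $\widehat{\mathbf{x}_{|\mathbf{w}}}(\mathbf{u})$ while its proof computes projections onto $\mathbf{u}_{|\mathbf{w}}$, consistent with the $R(\mathbf{u}_{|\mathbf{w}},\cdot)$ factors), so with the projections read that way your argument goes through; just replace the incorrect justification with the correct one.
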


We refer to $r(\mathbf{x}_{|\mathbf{w}},\mathbf{y}_{|\mathbf{w}})$ as the partial correlation between $\mathbf{x}$ and $\mathbf{y}$ given $\mathbf{w}$, and we denote it with $r_{\mathbf{x},\mathbf{y}|\mathbf{w}}$.  Likewise, we write $R_{\mathbf{u},\mathbf{x}|\mathbf{w}}$ for $R(\mathbf{u}_{|\mathbf{w}},\mathbf{x}_{|\mathbf{w}})$ and $R_{\mathbf{u},\mathbf{y}|\mathbf{w}}$ for $R(\mathbf{u}_{|\mathbf{w}},\mathbf{y}_{|\mathbf{w}})$. Since $|r|<1$ and additional explanatory columns can not decrease $R$, we have the following corollaries.
\begin{corollary}
\label{oneway1}
Let $\mathbf{s}$ be any subset of $\{\mathbf{u}_1,...,\mathbf{u}_k\}$.  Then
\[R_{\mathbf{u},\mathbf{x}|\mathbf{w}}R_{\mathbf{u},\mathbf{y}|\mathbf{w}}<|r_{\mathbf{x},\mathbf{y}|\mathbf{w}}| \implies \mathrm{sign}(\hat{\beta}_{\mathbf{x}|\mathbf{w},\mathbf{s}}(\mathbf{y}))= \mathrm{sign}(\hat{\beta}_{\mathbf{x}|\mathbf{w}}(\mathbf{y})).\]
\end{corollary}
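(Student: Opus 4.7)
The plan is to apply Proposition~\ref{firstresult} with the subset $\mathbf{s}$ playing the role of $\mathbf{u}$, and then use two elementary monotonicity facts to show that under the stated hypothesis the characteristic quantity in (\ref{theeq}) cannot exceed $1$, so no reversal can occur. For nonempty $\mathbf{s}$, Proposition~\ref{firstresult} says $\mathrm{sign}(\hat\beta_{\mathbf{x}|\mathbf{w},\mathbf{s}}(\mathbf{y}))\neq\mathrm{sign}(\hat\beta_{\mathbf{x}|\mathbf{w}}(\mathbf{y}))$ if and only if
\[
\frac{R_{\mathbf{s},\mathbf{x}|\mathbf{w}}\,R_{\mathbf{s},\mathbf{y}|\mathbf{w}}\,r(\widehat{\mathbf{x}_{|\mathbf{w}}}(\mathbf{s}),\widehat{\mathbf{y}_{|\mathbf{w}}}(\mathbf{s}))}{r_{\mathbf{x},\mathbf{y}|\mathbf{w}}}>1,
\]
so it suffices to bound the absolute value of the left-hand side strictly below $1$.

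The first bound uses $|r(\widehat{\mathbf{x}_{|\mathbf{w}}}(\mathbf{s}),\widehat{\mathbf{y}_{|\mathbf{w}}}(\mathbf{s}))|<1$, which follows from the standing non-collinearity assumption on subsets of the data vectors, to replace the numerator correlation by $1$. The second bound uses the fact highlighted just before the corollary: since $\mathbf{s}\subseteq\{\mathbf{u}_1,\dots,\mathbf{u}_k\}$, the column span of $[\mathbf{e}\,\mathbf{s}_{|\mathbf{w}}]$ sits inside that of $[\mathbf{e}\,\mathbf{u}_{|\mathbf{w}}]$, and so $R_{\mathbf{s},\mathbf{x}|\mathbf{w}}\le R_{\mathbf{u},\mathbf{x}|\mathbf{w}}$ and likewise with $\mathbf{y}$ in place of $\mathbf{x}$. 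Chaining these two bounds with the hypothesis $R_{\mathbf{u},\mathbf{x}|\mathbf{w}}R_{\mathbf{u},\mathbf{y}|\mathbf{w}}<|r_{\mathbf{x},\mathbf{y}|\mathbf{w}}|$ yields the required strict inequality below $1$.

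The empty-subset case $\mathbf{s}=\emptyset$ should be handled separately, because $\widehat{\mathbf{x}_{|\mathbf{w}}}(\emptyset)=\mathbf{0}$ by the null-argument convention, making the correlation appearing in (\ref{theeq}) undefined. But in that case $\hat\beta_{\mathbf{x}|\mathbf{w},\mathbf{s}}(\mathbf{y})=\hat\beta_{\mathbf{x}|\mathbf{w}}(\mathbf{y})$ trivially and the conclusion is immediate.

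I do not anticipate a serious obstacle: the argument is really a short chain of inequalities built from Proposition~\ref{firstresult} together with monotonicity. The only care needed is verifying that the monotonicity of $R$ transfers cleanly to the residualized covariates $\mathbf{s}_{|\mathbf{w}}$ and $\mathbf{u}_{|\mathbf{w}}$, which follows because residualization by $\mathbf{w}$ is a linear projection applied column-wise and therefore respects subset inclusion of column spaces.
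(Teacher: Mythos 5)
Your argument matches the paper's: the paper dispatches this corollary in one line as a ``ready consequence'' of Proposition~\ref{firstresult}, citing exactly the two facts you use---the correlation factor is bounded by one in absolute value, and $R$ cannot decrease when explanatory columns are added---and your write-up simply makes the chain of inequalities and the empty-subset case explicit. One small caveat: for a singleton $\mathbf{s}$ the correlation $r(\widehat{\mathbf{x}_{|\mathbf{w}}}(\mathbf{s}),\widehat{\mathbf{y}_{|\mathbf{w}}}(\mathbf{s}))$ equals $\pm 1$ exactly (as the paper itself notes), so the bound you actually need is $\le 1$ rather than $<1$; this does not affect the conclusion, since the strict inequality in the final chain comes from the hypothesis.
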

\begin{definition}
\label{vvv}
\[\mathbf{v}(\mathbf{x},\mathbf{y};\mathbf{w})=\frac{\mathbf{x}_{|\mathbf{w}}}{|\mathbf{x}_{|\mathbf{w}}|}+\frac{\mathbf{y}_{|\mathbf{w}}}{|\mathbf{y}_{|\mathbf{w}}|}\]
\end{definition}
\begin{definition}
\[r^*=|2r_{\mathbf{x},\mathbf{y}|\mathbf{w}}/(r_{\mathbf{x},\mathbf{y}|\mathbf{w}}+1)|\]
\end{definition}
\begin{corollary}
\label{oneway2}
Let $\mathbf{s}$ be any subset of $\{\mathbf{u}_1,...,\mathbf{u}_k\}$.  Then
\[R^2(\mathbf{u}_{|\mathbf{w}},\mathbf{v})<r^* \implies \mathrm{sign}(\hat{\beta}_{\mathbf{x}|\mathbf{w},\mathbf{s}}(\mathbf{y}))= \mathrm{sign}(\hat{\beta}_{\mathbf{x}|\mathbf{w}}(\mathbf{y})).\]
\end{corollary}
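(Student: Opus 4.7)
The plan is to first reduce to $\mathbf{s}=\mathbf{u}$ using monotonicity of $R^2$, and then prove the contrapositive by unpacking $R^2(\mathbf{u}_{|\mathbf{w}},\mathbf{v})$ via an orthogonal decomposition tied to the geometry of $\mathbf{v}$. For any $\mathbf{s}\subseteq\{\mathbf{u}_1,\ldots,\mathbf{u}_k\}$ the column space of $\mathbf{s}_{|\mathbf{w}}$ sits inside that of $\mathbf{u}_{|\mathbf{w}}$, so $R^2(\mathbf{s}_{|\mathbf{w}},\mathbf{v}) \leq R^2(\mathbf{u}_{|\mathbf{w}},\mathbf{v}) < r^*$. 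Consequently it will suffice to establish the cleaner claim ``if a reversal occurs adjusting by a covariate matrix $\mathbf{u}'$, then $R^2(\mathbf{u}'_{|\mathbf{w}},\mathbf{v}) \geq r^*$'' and then specialize it to $\mathbf{u}'=\mathbf{s}$.

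Next I would set up coordinates. Put $\mathbf{a} = \mathbf{x}_{|\mathbf{w}}/|\mathbf{x}_{|\mathbf{w}}|$ and $\mathbf{b} = \mathbf{y}_{|\mathbf{w}}/|\mathbf{y}_{|\mathbf{w}}|$, both unit and mean-zero, with $\rho := \mathbf{a}\cdot\mathbf{b} = r_{\mathbf{x},\mathbf{y}|\mathbf{w}}$ so that $|\mathbf{v}|^2 = 2(1+\rho)$. Let $P$ be orthogonal projection onto $\mathrm{span}(\mathbf{u}_{|\mathbf{w}})$ and set $\mathbf{a}^* = P\mathbf{a}$, $\mathbf{b}^* = P\mathbf{b}$, $A := |\mathbf{a}^*| = R_{\mathbf{u},\mathbf{x}|\mathbf{w}}$, $B := |\mathbf{b}^*| = R_{\mathbf{u},\mathbf{y}|\mathbf{w}}$. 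Because $\mathbf{v}$ has mean zero and $\mathbf{u}_{|\mathbf{w}}$ is orthogonal to $\mathbf{e}$, regressing $\mathbf{v}$ on $[\mathbf{e},\mathbf{u}_{|\mathbf{w}}]$ collapses to applying $P$, so
\[R^2(\mathbf{u}_{|\mathbf{w}},\mathbf{v}) \;=\; \frac{|\mathbf{a}^* + \mathbf{b}^*|^2}{|\mathbf{v}|^2} \;=\; \frac{A^2 + B^2 + 2\mathbf{a}^*\cdot\mathbf{b}^*}{2(1+\rho)}.\]

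I would then take the contrapositive and assume a reversal. By Frisch--Waugh--Lovell (equivalently, by Proposition \ref{firstresult}), $r_{\mathbf{x},\mathbf{y}|\mathbf{w},\mathbf{u}}$ equals the cosine of the angle between $\mathbf{a}-\mathbf{a}^*$ and $\mathbf{b}-\mathbf{b}^*$; combined with the orthogonal-decomposition identity $\rho = \mathbf{a}^*\cdot\mathbf{b}^* + (\mathbf{a}-\mathbf{a}^*)\cdot(\mathbf{b}-\mathbf{b}^*)$, a reversal is equivalent to $\mathbf{a}^*\cdot\mathbf{b}^*$ sharing sign with $\rho$ and satisfying $|\mathbf{a}^*\cdot\mathbf{b}^*| > |\rho|$. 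Specializing to $\rho>0$, reversal forces $\mathbf{a}^*\cdot\mathbf{b}^* > \rho$; then Cauchy--Schwarz gives $AB \geq \mathbf{a}^*\cdot\mathbf{b}^* > \rho$, and AM--GM gives $A^2+B^2 \geq 2AB > 2\rho$. Summing, the numerator above strictly exceeds $4\rho$, whence $R^2(\mathbf{u}_{|\mathbf{w}},\mathbf{v}) > 2\rho/(1+\rho) = r^*$, contradicting the hypothesis. The case $\rho<0$ runs symmetrically via the mirrored inequality $\mathbf{a}^*\cdot\mathbf{b}^* < \rho$.

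The main delicacy I anticipate is sign-bookkeeping: verifying that the reversal produces an inequality on $\mathbf{a}^*\cdot\mathbf{b}^*$ that is tight enough, when fed through Cauchy--Schwarz and AM--GM, to deliver exactly the $1+\rho$ denominator in $r^*$ rather than something slacker. Once the orthogonal decomposition $\mathbf{a} = \mathbf{a}^* + (\mathbf{a}-\mathbf{a}^*)$, $\mathbf{b} = \mathbf{b}^* + (\mathbf{b}-\mathbf{b}^*)$ is in place and the Pythagorean identity is used to pin down the sign of $\mathbf{a}^*\cdot\mathbf{b}^*$, the remaining inequalities are elementary.
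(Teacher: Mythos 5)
Your reduction to $\mathbf{s}=\mathbf{u}$ by monotonicity of $R^2$, and your argument in the case $\rho>0$, are correct and are essentially the paper's proof reorganized as a contrapositive: the expansion $|\hat{\mathbf{v}}(\mathbf{u})|^2=A^2+B^2+2\,\mathbf{a}^*\cdot\mathbf{b}^*$ is the paper's completing-the-square identity (\ref{compsq}), your Cauchy--Schwarz step is the paper's $|\hat{\vec{x}}(\vec{u})||\hat{\vec{y}}(\vec{u})|\geq\langle\hat{\vec{x}}(\vec{u}),\hat{\vec{y}}(\vec{u})\rangle$, and your AM--GM step is the paper's Jensen step. So far the two arguments buy the same thing.

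The one place your write-up does not go through as stated is the case $\rho<0$. There the reversal condition is indeed $\mathbf{a}^*\cdot\mathbf{b}^*<\rho<0$, but that inequality pushes the numerator $A^2+B^2+2\,\mathbf{a}^*\cdot\mathbf{b}^*$ \emph{down}, not up, so no lower bound on $R^2(\mathbf{u}_{|\mathbf{w}},\mathbf{v})$ with $\mathbf{v}=\mathbf{a}+\mathbf{b}$ follows. Concretely, take $\rho=-0.5$, $A=B=0.8$, $\mathbf{a}^*\cdot\mathbf{b}^*=-0.6$ (realizable, since the residual inner product $\rho-\mathbf{a}^*\cdot\mathbf{b}^*=0.1$ is within the residual norms' product $0.36$): a reversal occurs because $\rho-\mathbf{a}^*\cdot\mathbf{b}^*>0>\rho$, yet $R^2(\mathbf{u}_{|\mathbf{w}},\mathbf{v})=0.08$ while $|2\rho/(\rho+1)|=2$. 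The case is therefore not ``symmetric'' unless you first replace $\mathbf{a}$ by $-\mathbf{a}$ (equivalently $\mathbf{x}$ by $-\mathbf{x}$), which is exactly the paper's without-loss-of-generality step --- but that replacement changes $\mathbf{v}$ to $\mathbf{b}-\mathbf{a}$ and the threshold to $2|\rho|/(1+|\rho|)$, so Definition \ref{vvv} and $r^*$ must be understood as computed after the sign normalization. You should either carry out that sign flip explicitly at the start (tracking its effect on $\mathbf{v}$ and $r^*$) or restrict to $r_{\mathbf{x},\mathbf{y}|\mathbf{w}}>0$ as the paper's proof effectively does; as written, ``runs symmetrically via the mirrored inequality'' is a gap.
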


The conclusions of both corollaries are identical.  Proposition \ref{firstresult} is logically stronger than Corollary \ref{oneway1}, and Corollary \ref{oneway1} is logically stronger than Corollary \ref{oneway2}.  The condition within Corollary \ref{oneway2} is best possible (see Remark \ref{bestpos}) based on a single coefficient of determination for our desired conclusion.  The conclusion makes model-independent estimation possible for the direction of an effect.  All $2^k$ subsets of $\{\mathbf{u}_1,...,\mathbf{u}_k\}$ are handled simultaneously.  When $k$ is large, model-independent estimation can complement Bayesian model averaging (see \citet{Hoeting99}).  When $\mathbf{w}=\emptyset$, model-independent estimation can be implemented using only subject matter knowledge regarding hypothetical $\mathbf{u}$.  This is because coefficients of determination are intuitive.  Intuition for reversals of least-squares estimates and intuition relating to general adjustment of regression models can be improved through study of the imagery in Figure \ref{cols}.  When $\mathbf{u}$ refers to a single vector then $|r(\widehat{\mathbf{x}_{|\mathbf{w}}}(\mathbf{u}),\widehat{\mathbf{y}_{|\mathbf{w}}}(\mathbf{u}))|=1$, and the condition $R_{\mathbf{u},\mathbf{x}|\mathbf{w}}R_{\mathbf{u},\mathbf{y}|\mathbf{w}}>|r_{\mathbf{x},\mathbf{y}|\mathbf{w}}|$ is necessary and sufficient for a reversal.  Within the space orthogonal to the columns of $[\mathbf{e}\,\mathbf{w}]$ the reversal region for $\mathbf{u}_{|\mathbf{w}}$ is an ellipsoidal cone of two nappes (see (\ref{elcone})), with axis of symmetry along $\mathbf{v}$ and boundary vectors having coefficients of determination greater than or equal to $r^*$.

\begin{figure}[b]
\centering
\begin{tikzpicture}[scale=2.7]

\begin{scope}[canvas is zy plane at x=0]
\draw[line width=.6pt,->,color=purple,dashed] (0,0) -- (-1.5,3);
\draw[line width=2.0pt,->,color=purple] (0,0) -- (1.5,3);
\draw[line width=2.0pt,->,color=purple] (-1.08,2.16) -- (-1.5,3);
\end{scope}

\begin{scope}[canvas is xy plane at z=0]
\draw[line width=.6pt,->,color=red,dashed] (0,0) -- (-1.34,3);
\draw[line width=.6pt,->,color=red,dashed] (0,0) -- (1.34,3);
\draw[line width=.6pt, dashed, ->] (0,0) -- (0,3);
\draw[line width=1.0pt,color=blue] (0,0) -- (-1.581,2.70);
\draw[line width=.6pt,color=red,dashed] (0,0) -- (-1.445,2.7);
\draw[line width=1.0pt,color=blue] (0,0) -- (1.5,3);

\draw[line width=1.0pt,color=red]  (-1.36408,2.5488) -- (-1.445,2.7);

\draw[line width=2.0pt,->,color=red] (-1.12694,2.523) -- (-1.34,3);
\draw[line width=2.0pt,->] (0,2.48) -- (0,3);
\draw[line width=2.0pt,->,color=red] (1.273,2.85) -- (1.34,3);
\draw (-1.09,2.98) node {$\mathbf{x}_{|\mathbf{w}}$};
\draw (.13,3.00) node {$\mathbf{v}$};
\draw (1.14,3.01) node {$\mathbf{y}_{|\mathbf{w}}$};
\end{scope}

\begin{scope}[canvas is zx plane at y=3]
\draw[color=red, line width=1.2pt] ((0,0) ellipse (1.5cm and 1.34cm);
\draw[color=blue, line width=1.2pt] ((0,0) ellipse (1.5cm and 1.5cm);
\end{scope}

\end{tikzpicture}
\caption{A vector $\mathbf{u}$ has induced a reversal, $\mathrm{sign}(\hat{\beta}_{\mathbf{x}|\mathbf{w},\mathbf{u}}(\mathbf{y}))\neq \mathrm{sign}(\hat{\beta}_{\mathbf{x}|\mathbf{w}}(\mathbf{y}))$, if and only if within the span of $\{\mathbf{x}_{|\mathbf{w}},\mathbf{y}_{|\mathbf{w}},\mathbf{u}_{|\mathbf{w}}\}$ we have $\mathbf{u}_{|\mathbf{w}}$ or $-\mathbf{u}_{|\mathbf{w}}$ positioned within the red, elliptical cone.  The blue, spherical cone relates to Corollary $\ref{oneway2}$, and the square of the correlation between either purple vector and $\mathbf{v}$ is $r^*$.}
\label{cols}
\end{figure}
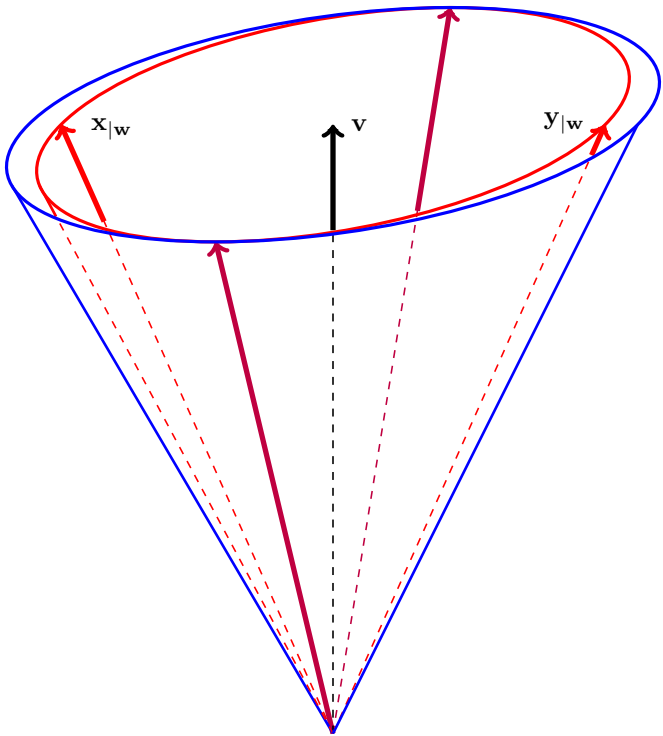
\section{Proofs}
\label{proofsec}
Corollary \ref{oneway1} is a ready consequence of Proposition \ref{firstresult}.  It thus remains to prove Proposition \ref{firstresult} and Corollary \ref{oneway2}.  Note how with $\perp$ indicating orthogonality between sets of vectors we have
\begin{equation}\label{takeout1}\{\mathbf{e},\mathbf{w}_1,...,\mathbf{w}_p\}\perp \{\mathbf{x}_{|\mathbf{w}},\mathbf{y}_{|\mathbf{w}},{\mathbf{u}_1}_{|\mathbf{w}},...,{\mathbf{u}_k}_{|\mathbf{w}}\},\end{equation}
and therefore
\begin{equation}\label{takeout}\hat{\beta}_{\mathbf{x}|\mathbf{w},\mathbf{u}}(\mathbf{y})=\hat{\beta}_{\mathbf{x}_{|\mathbf{w}}|\mathbf{u}_{|\mathbf{w}}}(\mathbf{y}_{|\mathbf{w}})~\text{and}~\hat{\beta}_{\mathbf{x}|\mathbf{w}}(\mathbf{y})=\hat{\beta}_{\mathbf{x}_{|\mathbf{w}}}(\mathbf{y}_{|\mathbf{w}}).\end{equation}
Let $\vec{x}$ stand for $\mathbf{x}_{|\mathbf{w}}$, $\vec{y}$ stand for $\mathbf{y}_{|\mathbf{w}}$, and $\vec{u}=[\vec{u}_1\,\cdots\,\vec{u}_k]$ stand for $\mathbf{u}_{|\mathbf{w}}$.  To prove Proposition \ref{firstresult} it thus suffices to demonstrate
\[\mathrm{sign}(\hat{\beta}_{\vec{x}|\vec{u}}(\vec{y}))\neq \mathrm{sign}(\hat{\beta}_{\vec{x}}(\vec{y}))\iff
\frac{R(\vec{u},\vec{x})R(\vec{u},\vec{y})r(\hat{\vec{x}}(\vec{u}),\hat{\vec{y}}(\vec{u}))}{r(\vec{x},\vec{y})}>1.\]

Since $\{\vec{x},\vec{y},\vec{u}_1,...,\vec{u}_k\}\perp \{\mathbf{e}\}$, each element of the set $\{\vec{x},\vec{y},\vec{u}_1,...,\vec{u}_k\}$ is a centered (mean zero) vector.  We can assume also that each vector is unit length and that $\{\vec{u}_1,...,\vec{u}_k\}$ is an orthonormal subset.
When $\vec{y}$ is regressed onto $[\vec{x}\,\vec{u}]$ the vector of fitted coefficients is \[\mathbf{\beta}= [\hat{\beta}_{\vec{x}|\vec{u}}(\vec{y})\,\hat{\beta}_{\vec{u}_1|\vec{x},\vec{u}_2,...,\vec{u}_k}(\vec{y})\,\cdots\,\hat{\beta}_{\vec{u}_k|\vec{x},\vec{u}_1,...,\vec{u}_{k-1}}(\vec{y})]^t.\]  With ${A} = [\vec{x}\,\vec{u}_1\,\cdots\,\vec{u}_k]$ the normal equations are
\[
\left({A}^t{A}\right)\hat{\beta}= {A}^t\vec{y}.
\]
Set ${B} = [\vec{y}\, \vec{u}_1\,\cdots\,\vec{u}_k]$.  Replacing the first column of ${A}^t{A}$ with ${A}^t\vec{y}$ produces the matrix ${A}^t{B}$, and by Cramer's rule
\begin{equation}
\label{dets}
\hat{\beta}_{\vec{x}|\vec{u}}(\vec{y}) = \frac{\det({A}^t{B})}{\det({A}^t{A})}.
\end{equation}

Because $\vec{u}$ is orthonormal,
\[
{A}^t{B} = \left[
                           \begin{array}{cccc}
                             \langle\vec{x},\vec{y}\rangle & \langle\vec{x},\vec{u}_1\rangle & \cdots & \langle \vec{x},\vec{u}_k\rangle \\
                             \langle\vec{u}_1,\vec{y}\rangle & 1 & \cdots & 0 \\
                             \vdots & \vdots & \ddots & \vdots \\
                             \langle\vec{u}_k,\vec{y}\rangle & 0 & \cdots & 1 \\
                           \end{array}
                         \right]
\]
and
\[
{A}^t{A} = \left[
                           \begin{array}{cccc}
                             \langle\vec{x},\vec{x}\rangle & \langle\vec{x},\vec{u}_1\rangle & \cdots & \langle\vec{x},\vec{u}_k\rangle \\
                             \langle\vec{u}_1,\vec{x}\rangle & 1 & \cdots & 0 \\
                             \vdots & \vdots & \ddots & \vdots \\
                             \langle\vec{u}_k,\vec{x}\rangle & 0 & \cdots & 1 \\
                           \end{array}
                         \right].
\]
The determinants from \ref{dets} can thus be evaluated using the Leibniz formula, and the result is
\begin{equation}
\label{closer}
\hat{\beta}_{\vec{x}|\vec{u}}(\vec{y})=\frac{\langle \vec{x},\vec{y} \rangle -\sum_{i=1}^{k}\langle \vec{x},\vec{u}_i\rangle\langle\vec{u}_i,\vec{y}\rangle}{\langle \vec{x},\vec{x} \rangle -\sum_{i=1}^{k}\langle \vec{x},\vec{u}_i\rangle\langle\vec{u}_i,\vec{x}\rangle}=\frac{\langle \vec{x},\vec{y}\rangle-\langle\hat{\vec{x}}(\vec{u}),\hat{\vec{y}}(\vec{u})\rangle}{\langle \vec{x},\vec{x}\rangle-\langle\hat{\vec{x}}(\vec{u}),\hat{\vec{x}}(\vec{u})\rangle}.
\end{equation}

With centered, unit-length data, we have $\hat{\beta}_{\vec{x}}(\vec{y})=\langle \vec{x},\vec{y} \rangle = r(\vec{x},\vec{y})$, $R(\vec{u},\vec{x})=|\hat{\vec{x}}(\vec{u})|$, and $R(\vec{u},\vec{y})=|\hat{\vec{y}}(\vec{u})|$.  These observations allow us to manipulate (\ref{closer}).  After multiplying by $\alpha:=\left(\langle \vec{x},\vec{x}\rangle-\langle\hat{\vec{x}}(\vec{u}),\hat{\vec{x}}(\vec{u})\rangle\right)$ the result is
\begin{align}
\label{line1} \alpha \hat{\beta}_{\vec{x}|\vec{u}}(\vec{y})&=\langle \vec{x},\vec{y}\rangle-\langle\hat{\vec{x}}(\vec{u}),\hat{\vec{y}}(\vec{u})\rangle \\ \nonumber
\frac{\alpha\hat{\beta}_{\vec{x}|\vec{u}}(\vec{y})}{\hat{\beta}_{\vec{x}}(\vec{y})}&=1-\frac{\langle\hat{\vec{x}}(\vec{u}),\hat{\vec{y}}(\vec{u})\rangle}{r(\vec{x},\vec{y})} \\
\nonumber
\alpha\frac{\hat{\beta}_{\vec{x}|\vec{u}}(\vec{y})}{\hat{\beta}_{\vec{x}}(\vec{y})}&=1-\frac{|\hat{\vec{x}}(\vec{u})||\hat{\vec{y}}(\vec{u})|}{r(\vec{x},\vec{y})}\frac{\langle\hat{\vec{x}}(\vec{u}),\hat{\vec{y}}(\vec{u})\rangle}{|\hat{\vec{x}}(\vec{u})||\hat{\vec{y}}(\vec{u})|}\\ \label{almost}
\alpha\frac{\hat{\beta}_{\vec{x}|\vec{u}}(\vec{y})}{\hat{\beta}_{\vec{x}}(\vec{y})}&=1-\frac{R(\vec{u},\vec{x})R(\vec{u},\vec{y})r(\hat{\vec{x}}(\vec{u}),\hat{\vec{y}}(\vec{u}))}{r(\vec{x},\vec{y})}.
\end{align}
Because $\alpha>0$, we see from (\ref{almost}) that \[\mathrm{sign}(\hat{\beta}_{\vec{x}|\vec{u}}(\vec{y}))\neq \mathrm{sign}(\hat{\beta}_{\vec{x}}(\vec{y}))\iff
\frac{R(\vec{u},\vec{x})R(\vec{u},\vec{y})r(\hat{\vec{x}}(\vec{u}),\hat{\vec{y}}(\vec{u}))}{r(\vec{x},\vec{y})}>1.\]
This completes the proof of Proposition \ref{firstresult}.

To demonstrate the truth of Corollary \ref{oneway2} we remain in the same context.  Each of the vectors in the set $\{\vec{x},\vec{y},\vec{u}_1,...,\vec{u}_k\}$ is centered and unit length, and we now additionally consider $\vec{v}=\vec{x}+\vec{y}$.  Note that $\vec{v}$ is equal to $\mathbf{v}$ from Definition \ref{vvv}.  We assume $\hat{\beta}_{\vec{x}}(\vec{y})>0$.  This can be assumed without loss of generality by replacing $\vec{x}$ with $-\vec{x}$ if necessary.  We show \begin{equation}\label{imp}R^2(\vec{u},\vec{v})<2r(\vec{x},\vec{y})/(1+r(\vec{x},\vec{y}))\implies \hat{\beta}_{\vec{x}|\vec{u}}(\vec{y})>0.\end{equation}
The condition for the implication within (\ref{imp}) can be written as
\begin{align}
\nonumber 2\langle \vec{x},\vec{y}\rangle/(1+\langle\vec{x},\vec{y}\rangle)&>|\hat{\vec{v}}(\vec{u})|^2/|\vec{v}|^2\\
\nonumber 2\langle \vec{x},\vec{y}\rangle/(1+\langle\vec{x},\vec{y}\rangle)&>|\hat{\vec{v}}(\vec{u})|^2/(2(1+\langle\vec{x},\vec{y}\rangle))\\
\nonumber 4\langle \vec{x},\vec{y}\rangle&>|\hat{\vec{v}}(\vec{u})|^2\\
\nonumber |\vec{v}|^2+2(\langle\vec{x},\vec{y}\rangle -1) &> |\hat{\vec{v}}(\vec{u})|^2\\
\nonumber \frac{1}{2}\left(|\vec{v}|^2-|\hat{\vec{v}}(\vec{u})|^2\right)+\langle\vec{x},\vec{y}\rangle -1&>0\\
\frac{1}{2}\left(|\vec{v}|^2-|\hat{\vec{v}}(\vec{u})|^2\right)+\langle\vec{x},\vec{y}\rangle-(\langle\vec{x},\vec{y}\rangle-\langle \hat{\vec{x}}(\vec{u}),\label{lastone}\hat{\vec{y}}(\vec{u})\rangle)-1&>-(\langle\vec{x},\vec{y}\rangle-\langle \hat{\vec{x}}(\vec{u}),\hat{\vec{y}}(\vec{u})\rangle).
\end{align}
Since $|\hat{\vec{x}}(\vec{u})||\hat{\vec{y}}(\vec{u})|\geq \langle\hat{\vec{x}}(\vec{u}),\hat{\vec{y}}(\vec{u})\rangle=\langle\vec{x},\vec{y}\rangle-(\langle\vec{x},\vec{y}\rangle-\langle \hat{\vec{x}}(\vec{u}),\hat{\vec{y}}(\vec{u})\rangle)$ we have
\[\frac{1}{2}\left(|\vec{v}|^2-|\hat{\vec{v}}(\vec{u})|^2\right)+|\hat{\vec{x}}(\vec{u})||\hat{\vec{y}}(\vec{u})|-1>-(\langle\vec{x},\vec{y}\rangle-\langle \hat{\vec{x}}(\vec{u}),\hat{\vec{y}}(\vec{u})\rangle),\]
and via Jensen's inequality $\frac{1}{2}(|\hat{\vec{x}}(\vec{u})|^2+|\hat{\vec{y}}(\vec{u})|^2)\geq\frac{1}{4}(|\hat{\vec{x}}(\vec{u})|+|\hat{\vec{y}}(\vec{u})|)^2\geq |\hat{\vec{x}}(\vec{u})||\hat{\vec{y}}(\vec{u})|$.  Therefore,
\begin{equation}
\label{nextone2}
\frac{1}{2}\left(|\vec{v}|^2-|\hat{\vec{v}}(\vec{u})|^2\right)+\frac{1}{2}(|\hat{\vec{x}}(\vec{u})|^2+|\hat{\vec{y}}(\vec{u})|^2)-1>-(\langle\vec{x},\vec{y}\rangle-\langle \hat{\vec{x}}(\vec{u}),\hat{\vec{y}}(\vec{u})\rangle).
\end{equation}
 \begin{remark}
\label{bestpos}
(\ref{lastone}) and (\ref{nextone2}) are logically equivalent if and only if $\hat{\vec{x}}(\vec{u})=\hat{\vec{y}}(\vec{u})$.
\end{remark}
\noindent Completing the square gives \begin{equation}\label{compsq}(\langle\vec{x},\vec{y}\rangle-\langle \hat{\vec{x}}(\vec{u}),\hat{\vec{y}}(\vec{u})\rangle)=\frac{1}{2}\left(|\vec{v}|^2-|\hat{\vec{v}}(\vec{u})|^2\right)+\frac{1}{2}(|\hat{\vec{x}}(\vec{u})|^2+|\hat{\vec{y}}(\vec{u})|^2)-1.\end{equation}
Substitution of $(\langle\vec{x},\vec{y}\rangle-\langle \hat{\vec{x}}(\vec{u}),\hat{\vec{y}}(\vec{u})\rangle)$ for $\frac{1}{2}\left(|\vec{v}|^2-|\hat{\vec{v}}(\vec{u})|^2\right)+\frac{1}{2}(|\hat{\vec{x}}(\vec{u})|^2+|\hat{\vec{y}}(\vec{u})|^2)-1$ within (\ref{compsq}) thus leads to
\[\langle\vec{x},\vec{y}\rangle-\langle \hat{\vec{x}}(\vec{u}),\hat{\vec{y}}(\vec{u})\rangle>0,\]
which by (\ref{line1}) is the desired conclusion of (\ref{imp}).  This completes the proof of Corollary \ref{oneway2}.

By (\ref{takeout1}) and (\ref{takeout}), when $k=1$, the reversal region for $\mathbf{u}$ consists of those points within the column space that project onto a region, $V(\vec{x},\vec{y})$, within the space that is orthogonal to the columns of $[\mathbf{e}\,\mathbf{w}]$.  To see how $V$ is an ellipsoidal cone, set $r=|r(\vec{x},\vec{y})|$, scale $\vec{x}$ and $\vec{y}$ (perhaps negatively), and select orthonormal coordinates for that space of $m=n-p-1$ dimensions so that
\[\vec{x}=\left(-\sqrt{\frac{1-r}{2}}, \sqrt{\frac{1+r}{2}},0,...,0\right)\mathrm{~and~}\vec{y}=\left(\sqrt{\frac{1-r}{2}}, \sqrt{\frac{1+r}{2}},0,...,0\right).\] Let $u=(u_1,...,u_m)$ be a variable vector in that same space.  We have
\[\hat{\vec{x}}(u)=\left(-u_1\sqrt{\frac{1-r}{2}}+u_2\sqrt{\frac{1+r}{2}}\right)\frac{u}{|u|^2}\mathrm{~and~}\hat{\vec{y}}(u)=\left(u_1\sqrt{\frac{1-r}{2}}+u_2\sqrt{\frac{1+r}{2}}\right)\frac{u}{|u|^2}.\]  Therefore, via (\ref{line1}), $\hat{\beta}_{\vec{x},\vec{y}|u}=0$ if and only if
\begin{align}
\nonumber 2r&=2\left(\langle \hat{\vec{x}}(u),\hat{\vec{y}}(u)\rangle\right)\\
\nonumber2r&=
 2\left(-u_1\sqrt{\frac{1-r}{2}}+u_2\sqrt{\frac{1+r}{2}}\right)\left(u_1\sqrt{\frac{1-r}{2}}+u_2\sqrt{\frac{1+r}{2}}\right)\frac{|u|^2}{|u|^4}\\
\nonumber 2r&=2\left(u_2^2\frac{(1+r)}{2}-u_1^2\frac{(1-r)}{2}\right)\frac{1}{|u|^2}\\
\label{you}2r|u|^2&=\left(u_2^2(1+r)-u_1^2(1-r)\right).
\end{align}
With $u_2=1$, line (\ref{you}) can be written as
\begin{align}
\nonumber 2r(u_1^2+1+u_3^2+...+u_m^2)&=(1+r)+(r-1)u_1^2\\
\nonumber (r+1)u_1^2+2r(u_3^2+...+u_m^2)&=1-r\\
\label{elcone}\frac{1+r}{1-r}u_1^2+\frac{2r}{1-r}(u_3^2+...+u_m^2)&=1.
\end{align}
Since scaling of $u$ does not affect $\hat{\beta}_{\vec{x},\vec{y}|u}$, the zero set $\{u:\hat{\beta}_{\vec{x},\vec{y}|u}=0\}$ is conical, of two nappes, with ellipsoidal cross-sections.
The cross sections are approximately spherical for large values of $r$.

\section{Applications}
Analysis of reversals has produced the mathematical results of Section \ref{resultssec}.  These results can be used when the direction of a unique effect is of interest, as could be the case during study of the safety of a medical intervention for example.  These results are generally capable of handling continuous or categorical data, and they lead to necessary conditions for Simpson's paradox.  The results are meant mainly for use during sensitivity analysis.  Sensitivity of bivariate correlation coefficients can be assessed even in the absence of covariate data, since $r$ and $R^2$ values are readily estimated using only subject matter knowledge.  Sensitivity of multiple regression coefficients may be assessed in a similar manner, although using subject matter knowledge to estimate partial coefficients may be difficult.  The general formulation of results within Section \ref{resultssec} is meant for application during analysis of large data sets.  With a large number of covariates it may not be computationally feasible to fit every possible model extension, yet computation of partial coefficients along with Proposition \ref{firstresult} can make model-independent estimation possible.  Model-independent estimation is demonstrated in Section \ref{miesec} with dietary health data from the United Nations.  Section \ref{simpsec} shows how an occurrence of Simpson's paradox implies the reversal of a least-squares estimate, but not vice versa.
\subsection{Simpson's paradox}
\label{simpsec}
\begin{definition}
\label{catys}{\it Simpson's paradox} is the designation for a surprising situation that may occur when two populations are compared with respect to the incidence of some attribute: if the populations are separated in parallel into a set of descriptive categories, the population with higher overall incidence may yet exhibit a lower incidence within each such category \citep{Wagner82}.
\end{definition}
For examples of Simpson's paradox see Section 2 or Section 3 of Wagner's article or the examples section of \citet{Julious94}.  Another well known example occurred when the University of California at Berkeley was sued for gender bias.  Overall, female graduate school applicants were being admitted at a lower rate than males, but within most departments (where autonomous decisions were being made) females were being admitted at higher rates than males.  The bias did not reverse within every department, yet the authors still chose to describe the situation as ``a paradox, sometimes referred to as Simpson's" \citep{Bickel75}.  Similar terminology has been used by \citet{Appleton96}.  Common to these examples is a reversal of the purported effect of population on incidence.  We thus propose a weaker definition using the terminology of least-squares regression.
\begin{definition}
\label{reversal}
Let $\mathbf{x}$ indicate population, $\mathbf{y}$ indicate incidence, and $\mathbf{u}=[\mathbf{u}_1\,...\,\mathbf{u}_k]$ indicate category.  We say that a {\it reversal} of the effect of $\mathbf{x}$ on $\mathbf{y}$ has occurred if
\[\mathrm{sign}(\hat{\beta}_{\mathbf{x}|\mathbf{u}}(\mathbf{y}))\neq \mathrm{sign}(\hat{\beta}_{\mathbf{x}}(\mathbf{y})).\]
\end{definition}
\begin{lemma}
If Simpson's Paradox has occurred then a Reversal has occurred.
\end{lemma}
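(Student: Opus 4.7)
The plan is to translate Simpson's paradox into the regression language of Definition \ref{reversal} and then compute the two slopes $\hat{\beta}_{\mathbf{x}}(\mathbf{y})$ and $\hat{\beta}_{\mathbf{x}|\mathbf{u}}(\mathbf{y})$ in closed form. Encode the two populations by a binary vector $\mathbf{x}\in\{0,1\}^n$, the attribute by a binary vector $\mathbf{y}\in\{0,1\}^n$, and the $k+1$ descriptive categories by $k$ dummy indicators $\mathbf{u}_1,\ldots,\mathbf{u}_k$ with one category held out as baseline. For category $c$, let $n_{pc}$ denote the count in population $p$, set $n_c=n_{0c}+n_{1c}$, and let $y_{pc}$ denote the incidence of the attribute in population $p$ of category $c$. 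Simpson's paradox then amounts to: the overall rates satisfy $\bar{y}_1>\bar{y}_0$ while $y_{1c}<y_{0c}$ for every $c$ (or the same with both inequalities reversed).

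For the simple slope, a direct calculation using $\hat{\beta}_{\mathbf{x}}(\mathbf{y})=\mathrm{cov}(\mathbf{x},\mathbf{y})/\mathrm{var}(\mathbf{x})$ with a binary regressor gives $\hat{\beta}_{\mathbf{x}}(\mathbf{y})=\bar{y}_1-\bar{y}_0$, so its sign records the direction of the overall comparison. For the partial slope I would invoke the Frisch--Waugh--Lovell reduction already used in the proof of Proposition \ref{firstresult} via equations (\ref{takeout1})--(\ref{takeout}), which gives $\hat{\beta}_{\mathbf{x}|\mathbf{u}}(\mathbf{y})=\hat{\beta}_{\mathbf{x}_{|\mathbf{u}}}(\mathbf{y}_{|\mathbf{u}})$. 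Within each category $c$, $\mathbf{x}_{|\mathbf{u}}$ takes only the two values $1-p_c$ (for population 1) and $-p_c$ (for population 0), where $p_c=n_{1c}/n_c$, while $\mathbf{y}_{|\mathbf{u}}$ takes category-centered values. A short per-category sum then collapses into
\[
\hat{\beta}_{\mathbf{x}|\mathbf{u}}(\mathbf{y}) \;=\; \frac{\sum_{c} (n_{0c}n_{1c}/n_c)\,(y_{1c}-y_{0c})}{\sum_{c} n_{0c}n_{1c}/n_c},
\]
exhibiting the partial slope as a convex combination of the within-category differences, with strictly positive weights, since Simpson's paradox presumes that both populations appear in every category.

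Combining the two calculations, the numerator above is a positively weighted sum of quantities all of sign opposite to $\bar{y}_1-\bar{y}_0$, so $\mathrm{sign}(\hat{\beta}_{\mathbf{x}|\mathbf{u}}(\mathbf{y}))\neq\mathrm{sign}(\hat{\beta}_{\mathbf{x}}(\mathbf{y}))$, as required. The main obstacle I anticipate is establishing the positivity of the weights in the partial slope; without a binary regressor this would not be automatic, and a signed weighted average of negatives could in principle come out positive. Here the binary encoding of population makes the per-category FWL decomposition cleanly separate and guarantees the weights $n_{0c}n_{1c}/n_c$. A minor bookkeeping point is that categories populated by only one group simply contribute zero weight, so they cannot disturb the sign pattern, which is also consistent with the fact that Simpson's paradox only constrains categories in which both populations are represented.
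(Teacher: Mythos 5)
Your proof is correct, but it takes a genuinely different route from the paper's. The paper argues variationally on the least-squares objective: it restricts attention to intercept tuples $(\beta_0,\ldots,\beta_k)$ with $\beta_j<\bar{y}(j,1)$ (a region containing the minimizer) and shows that for any such tuple and any $\alpha>0$ the sum of squares at $\beta_{\mathbf{x}}=\alpha$ is strictly smaller than at $\beta_{\mathbf{x}}=-\alpha$, so the minimizer must have $\hat{\beta}_{\mathbf{x}}>0$ whenever every within-category slope is positive; no closed form for the adjusted coefficient is ever produced, and the unadjusted slope $\bar{y}_1-\bar{y}_0$ is left implicit. You instead compute both coefficients explicitly: $\hat{\beta}_{\mathbf{x}}(\mathbf{y})=\bar{y}_1-\bar{y}_0$ for the simple regression, and, via the Frisch--Waugh--Lovell reduction in the spirit of (\ref{takeout1})--(\ref{takeout}), the adjusted slope as the Mantel--Haenszel-type weighted average $\sum_c (n_{0c}n_{1c}/n_c)(y_{1c}-y_{0c})\big/\sum_c (n_{0c}n_{1c}/n_c)$, from which the sign reversal is immediate. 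Your approach buys an explicit quantitative expression for the adjusted effect (and cleanly disposes of the edge case of categories containing only one population, which get weight zero), at the cost of depending on the indicator coding of $\mathbf{x}$ and $\mathbf{u}$; the paper's argument is coding-free and shorter but purely qualitative, establishing only the sign of the minimizer. Both are valid proofs of the lemma.
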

\begin{proof}
There are $k+1$ categories.  Let $j$ be an element of $\{0,1,...,k\}$.  Let $\hat{\beta}_{\mathbf{x}}(j)$ represent the least-squares slope coefficient for $\mathbf{x}$ when $\mathbf{y}$ is regressed onto $\mathbf{x}$ over only those data in category $j$.  For every $j$ we assume that $\hat{\beta}_{\mathbf{x}}(j)>0$.

Given a regression of $\mathbf{y}$ onto $[\mathbf{x}\,\mathbf{e}\,\mathbf{u}_1\,...\,\mathbf{u}_k]$ we have least-squares fitted coefficients $\{\hat{\beta}_{\mathbf{x}},\hat{\beta}_0,\hat{\beta}_1,...\hat{\beta}_k\}$.  The sum of the squares is a function of $(\beta_{\mathbf{x}},\beta_0,\beta_1,...\beta_k)$, and $(\hat{\beta}_{\mathbf{x}},\hat{\beta}_0,\hat{\beta}_1,...\hat{\beta}_k)$ is the minimizer.  For $i\in\{0,1\}$ let $\bar{y}(j,i)$ denote the mean of those observations in category $j$ with $X=i$.

Observe how for every $j$ we have $\hat{\beta}_j<\bar{y}(j,1)$.  Thus we consider only $(\beta_0,\beta_1,...\beta_k)$ such that $\beta_j<\bar{y}(j,1)$ for each $j$.  Note how for any such tuple with $\alpha>0$ that the sum of the squares when $\beta_{\mathbf{x}}=\alpha$ is less than the sum of the squares when $\beta_{\mathbf{x}}=-\alpha$.  Therefore $\hat{\beta}_{\mathbf{x}}>0$.
\end{proof}

The preceding proof shows how the criterion for Simpson's paradox is strictly stronger than the criterion for a reversal.  The corollaries of Section \ref{resultssec} can thus be modified into theorems for Simpson's paradox.   Let $\mathbf{x}$ indicate population, let $\mathbf{y}$ indicate attribute presence, and let $\mathbf{u}$ indicate category.  Let $P$ be a partition of $\{\mathbf{u}_1,...,\mathbf{u}_k\}$ into $q$ cells, where $1\leq q\leq k$.  Let $\mathbf{t}$ be a matrix with columns indicating cell membership.  Note that $R^2$ for $\mathbf{t}$ is less than or equal to $R^2$ for $\mathbf{u}$.  Simpson's paradox is with respect to $\mathbf{t}$.  Coefficients of determination for sets of indicator variables are well defined as long as the same non-zero quantity is used to indicate membership for all individuals within a specific category.  Generally, the coefficient of determination can be defined as a geometric property of linear subspaces, and thus it is invariant under change of basis.
\begin{theorem}[Strong, Necessary Condition for Simpson's paradox]
Simpson's paradox can not occur unless
 \[R(\mathbf{u},\mathbf{x})R(\mathbf{u},\mathbf{y})>|r(\mathbf{x},\mathbf{y})|.\]
\end{theorem}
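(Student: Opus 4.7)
The plan is to chain together three facts already in hand: the Lemma (which says that Simpson's paradox implies a sign reversal), Proposition \ref{firstresult} (which gives the exact condition for a sign reversal in terms of $R$ values and a correlation between fitted vectors), and the already-noted observation that coarsening a partition can only decrease $R^2$. The argument will be a contrapositive reduction: assume Simpson's paradox has occurred and work backwards to the inequality.

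First I would apply the Lemma directly. With $\mathbf{t}$ standing for the cell-indicator matrix of the partition $P$, Simpson's paradox with respect to $\mathbf{t}$ gives
\[\mathrm{sign}(\hat{\beta}_{\mathbf{x}|\mathbf{t}}(\mathbf{y}))\neq \mathrm{sign}(\hat{\beta}_{\mathbf{x}}(\mathbf{y})).\]
Next I would invoke Proposition \ref{firstresult} in the special case $\mathbf{w}=\emptyset$, with $\mathbf{t}$ playing the role of $\mathbf{u}$. The reversal just obtained is then equivalent to
\[\frac{R(\mathbf{t},\mathbf{x})\,R(\mathbf{t},\mathbf{y})\,r(\hat{\mathbf{x}}(\mathbf{t}),\hat{\mathbf{y}}(\mathbf{t}))}{r(\mathbf{x},\mathbf{y})}>1.\]
Bounding $|r(\hat{\mathbf{x}}(\mathbf{t}),\hat{\mathbf{y}}(\mathbf{t}))|\leq 1$ and clearing the denominator yields $R(\mathbf{t},\mathbf{x})\,R(\mathbf{t},\mathbf{y})>|r(\mathbf{x},\mathbf{y})|$. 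Finally I would apply the monotonicity already flagged in the setup: because each cell indicator in $\mathbf{t}$ is a sum of category indicators drawn from $\{\mathbf{u}_1,\dots,\mathbf{u}_k\}$, the column space of $\mathbf{t}$ sits inside the column space of $\mathbf{u}$, so $R(\mathbf{t},\mathbf{x})\leq R(\mathbf{u},\mathbf{x})$ and similarly for $\mathbf{y}$. Chaining these inequalities gives the desired $R(\mathbf{u},\mathbf{x})\,R(\mathbf{u},\mathbf{y})>|r(\mathbf{x},\mathbf{y})|$.

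There is no real obstacle here; the theorem is essentially the contrapositive of Corollary \ref{oneway1} (taken with $\mathbf{w}=\emptyset$ and applied to the coarsening $\mathbf{t}$) combined with the Lemma. The only point requiring a sentence of care is the passage from $\mathbf{t}$ to $\mathbf{u}$, which is justified by the subspace inclusion of column spaces noted just before the theorem statement. Everything else is a mechanical substitution into results already proved.
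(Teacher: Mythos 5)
Your proposal is correct and follows essentially the same route the paper intends: the paper's own justification is precisely the chain Lemma (Simpson's paradox $\Rightarrow$ reversal), then Corollary \ref{oneway1}/Proposition \ref{firstresult} with $\mathbf{w}=\emptyset$ applied to the cell-indicator matrix $\mathbf{t}$, then the monotonicity $R(\mathbf{t},\cdot)\leq R(\mathbf{u},\cdot)$ that the paper flags just before the theorem. Your write-up merely makes explicit the bound $|r(\hat{\mathbf{x}}(\mathbf{t}),\hat{\mathbf{y}}(\mathbf{t}))|\leq 1$ that the paper leaves implicit.
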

\begin{theorem}[Weak, Necessary Condition for Simpson's paradox]
Simpson's paradox can not occur unless \[R^2(\mathbf{u},\mathbf{v})>r^*.\]
\end{theorem}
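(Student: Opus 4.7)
The plan is to combine three ingredients: (i) the just-proved lemma, which converts an occurrence of Simpson's paradox into a reversal of a least-squares coefficient; (ii) Corollary~\ref{oneway2} with $\mathbf{w}=\emptyset$, which rules out such a reversal when $R^2(\cdot,\mathbf{v})$ falls below $r^*$; and (iii) monotonicity of $R^2$ under enlarging the regressor matrix from $\mathbf{t}$ to $\mathbf{u}$.

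The first step is to apply the lemma with the categorical covariate taken to be the cell-indicator matrix $\mathbf{t}$ rather than the finer category-indicator matrix $\mathbf{u}$. The lemma's argument treats the categories abstractly and does not use the particular choice of indicators, so its conclusion reads $\mathrm{sign}(\hat{\beta}_{\mathbf{x}|\mathbf{t}}(\mathbf{y})) \neq \mathrm{sign}(\hat{\beta}_{\mathbf{x}}(\mathbf{y}))$. Second, I would invoke Corollary~\ref{oneway2} in its contrapositive form, taking $\mathbf{w}=\emptyset$ and treating $\mathbf{t}$ as the covariate matrix with $\mathbf{s}$ equal to the full set of its columns; the reversal just obtained then forces $R^2(\mathbf{t},\mathbf{v}) \ge r^*$.

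Third, because every column of $\mathbf{t}$ is a sum of columns of $\mathbf{u}$ (each cell indicator is a sum of the category indicators assigned to that cell), the column space of $[\mathbf{e}\,\mathbf{t}]$ is contained in that of $[\mathbf{e}\,\mathbf{u}]$. Orthogonal projection of $\mathbf{v}$ onto the larger subspace has a squared length at least as great, which yields $R^2(\mathbf{t},\mathbf{v}) \le R^2(\mathbf{u},\mathbf{v})$. Chaining the two inequalities produces the desired conclusion $R^2(\mathbf{u},\mathbf{v}) \ge r^*$.

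The main obstacle I anticipate is twofold. First, Proposition~\ref{firstresult} and Corollary~\ref{oneway2} were stated under the blanket assumption that each subset of the relevant vectors is linearly independent, whereas cell-indicator and category-indicator columns are typically collinear with $\mathbf{e}$. This is preempted by the paper's remark that coefficients of determination are geometric invariants of the associated linear subspaces, and I would cite that convention to justify the application. Second, the contrapositive of Corollary~\ref{oneway2} delivers only the non-strict inequality $\ge r^*$, whereas the theorem asserts the strict $> r^*$. The sum-of-squares comparison at the end of the preceding lemma is strict (between $\beta_{\mathbf{x}}=\alpha$ and $\beta_{\mathbf{x}}=-\alpha$), so the reversal it produces is strict; tracing this strictness through the boundary case of Corollary~\ref{oneway2} upgrades the inequality accordingly. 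Articulating this minor tightening cleanly, without re-deriving the whole chain, is the most delicate part of the write-up.
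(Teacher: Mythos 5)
Your proposal is correct and follows essentially the same route the paper intends: the lemma converts Simpson's paradox into a reversal, the contrapositive of Corollary~\ref{oneway2} (with $\mathbf{w}=\emptyset$) bounds $R^2(\mathbf{t},\mathbf{v})$, and the monotonicity $R^2(\mathbf{t},\mathbf{v})\le R^2(\mathbf{u},\mathbf{v})$ --- which the paper notes explicitly just before stating the theorems --- completes the chain. The two technical points you flag (collinearity of indicators with $\mathbf{e}$, and upgrading $\ge$ to $>$ via the strictness of the within-category reversal) are real but minor details that the paper glosses over, and your handling of both is sound.
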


Necessary conditions for reversals of least-squares estimates are necessary conditions for Simpson's paradox, but these conditions are not adequate for other varieties of ecological fallacy (see \citet{Piantadosi88}).  This distinction is relevant throughout the next subsection, where we analyze country-level effects.  Analysis of reversals can be used to determine whether or not these country-level effects are due to categorization into continents, as might be suggested if confounding due to ethnicity or genetic makeup is suspected, but further assumptions would be required in order to pass to continent-level or individual-level results.  Our focus here is not on multilevel analysis nor traditional inference but rather a technique that adjusts for an indeterminate set of covariates.
\subsection{Model-independent estimation}
\label{miesec}
In this subsection we demonstrate methodology with data that was recorded in 2008 and 2009 by the United Nations.  The data was obtained in 2013 from three different sources: the World Health Organization (WHO), the Human Development Report Office (HDRO), and the Food and Agriculture Organization (FAO).  For each of 155 countries, age-adjusted, mean, total cholesterol levels \citep{WHO08} and Human Development Index (HDI) scores \citep{HDRO} were retrieved, along with per capita consumption rates for meat, milk, eggs \citep{FAO}, fish, and animal fats \citep{FAOSTAT}.\symbolfootnote[2]{These data were retrospectively selected for instructive demonstrations of model-independent estimation.  Variables were chosen for pedagogical reasons unrelated to scientific study of cholesterol, and causal conclusions are not intended nor implied by these demonstrations.}  
%Some degree of allegory may be present as cholesterol is used to illustrate reversal analysis.}
% More thorough analysis is required for causal conclusions

HDI is an index that measures the state of human development within a country, utilizing indicators relating to life expectancy, educational attainment, and income per capita.  Among the variables just mentioned, HDI correlates most strongly with cholesterol levels, with a correlation coefficient of approximately $r=0.91$.  (Henceforth we round all estimates to the nearest hundredth.)  A bivariate plot of HDI and cholesterol data is shown in Figure \ref{astrongcor}.  Analysis of reversals leads to the belief that such a strong correlation is unlikely to be reversed by controlling for covariates.

\begin{figure}[t]
\centering
\includegraphics[width=4.6in]{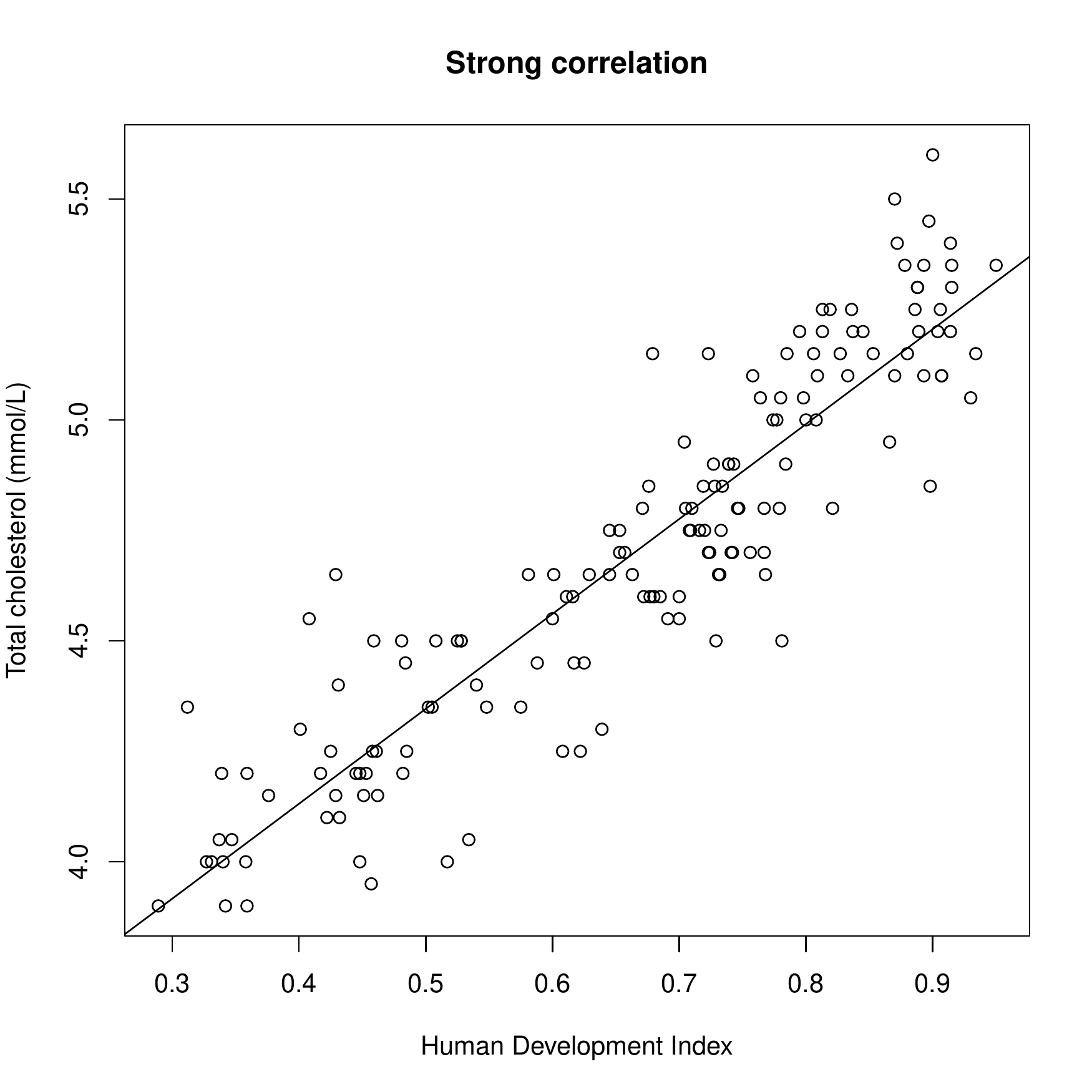}
\caption{A scatter plot (n=155) showing a strong correlation ($r\approx 0.91$) between development and mean total cholesterol levels at the country level}
\label{astrongcor}
\end{figure}

Meat consumption is measured in kg/person/year and includes consumption of pig, poultry, cattle, and sheep.  The observed correlation between meat consumption and cholesterol is $0.81$, while the observed correlation between meat consumption and HDI is $0.82$.  These numbers, while impressive, are not strong enough to induce a reversal.  The magnitude of their product, $0.66$, is less than that required for a reversal, $r=0.91$, and therefore by Corollary \ref{oneway1} we can be sure that the direction of the estimate for the effect of HDI on cholesterol is not sensitive to adjustment by control for meat consumption.

The actual fitted linear model of cholesterol in terms of HDI and meat consumption gives more information.  When fit over standardized data, so as to allow for comparison across differing units, HDI remains the dominant explanatory variable.  Its fitted coefficient is $0.026$, with an associated $t$ statistic of $13.8$ ($\text{p}\approx 10^{-15}$), while the fitted coefficient for meat is $0.006$, with a $t$ statistic of $3.0$ ($\text{p}\approx 0.004$).  The retained importance of HDI is visually evident in Figure \ref{3dplot}.  At nearly all levels of meat consumption the estimate for the effect of increasing HDI on cholesterol remains strongly positive.

\begin{figure}[t]
\includegraphics[width=4.6in]{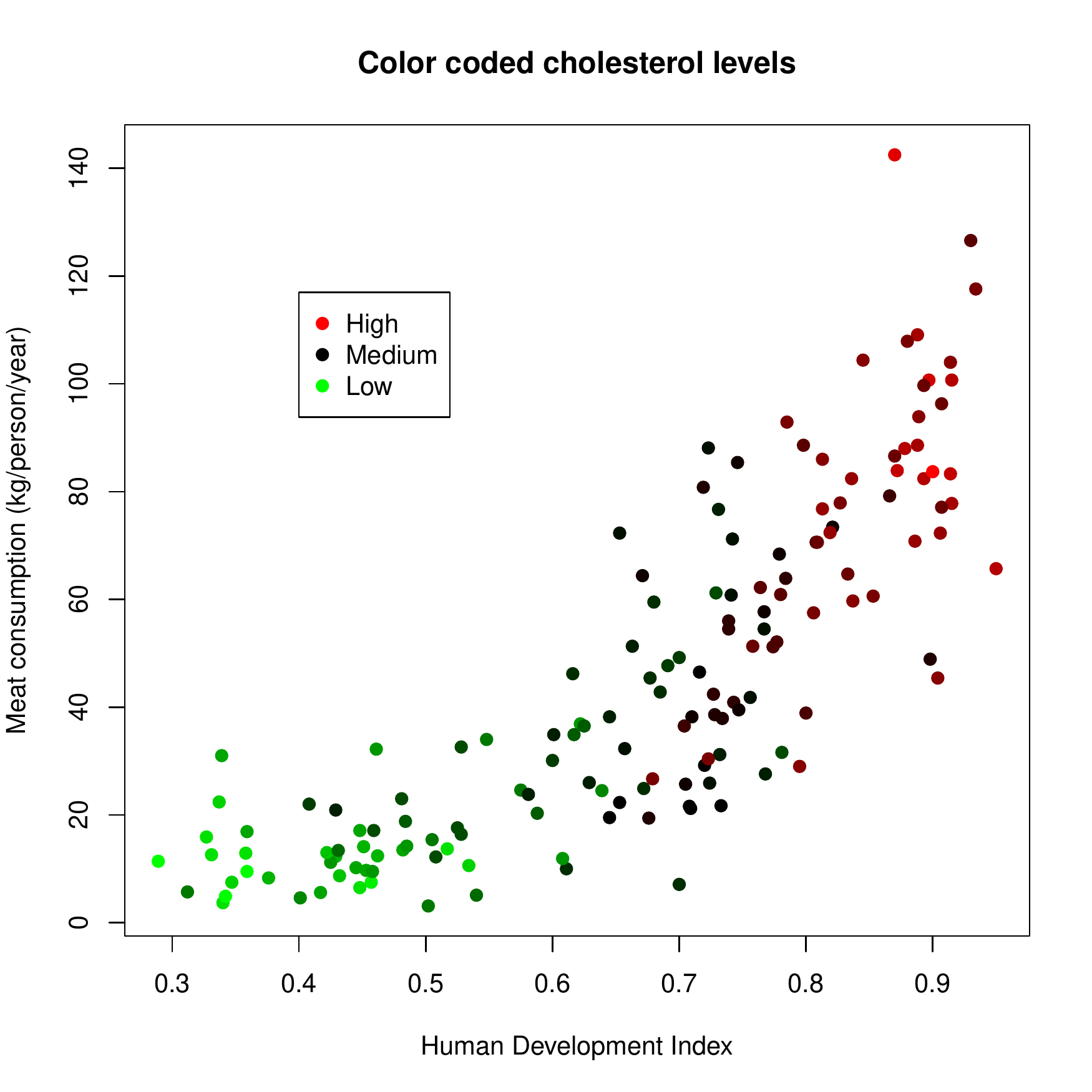}
\caption{A plot (n=155) showing the trivariate relationship between color-coded, mean, total cholesterol levels, development, and meat consumption, at the country level: conditional on meat consumption the relationship between development and cholesterol appears linear and the estimate for the effect of development on cholesterol remains strongly positive; conditional on development the estimate for the effect of meat consumption on cholesterol is much smaller in magnitude.}
\label{3dplot}
\end{figure}

Next we adjust for meat, milk, eggs, fish, and animal fat, simultaneously.  A large model of cholesterol in terms of HDI and all these dietary variables is summarized in Table \ref{Taball}.  HDI remains highly significant ($t=8.49, p\approx10^{-13}$), and its dominance is not unexpected.  We know from Corollary \ref{oneway2} that for a reversal to occur the dietary variables' coefficient of determination for $\mathbf{v}$ (the standardized sum of HDI and cholesterol) must be larger than $r^*=2r/(r+1)=0.96$, and calculation reveals that this coefficient is only $0.84$.  Therefore, adjustment for any subset of the dietary variables can not induce a reversal.  This conclusion could conceivably have been reached even in the absence of data, since coefficients of determination can be estimated with subject matter knowledge.

\begin{table}[t]
\caption{A linear model of cholesterol fit to standardized, country-level data.  HDI is the dominant explanatory variable, even when the five dietary variables are combined into one variable, namely the vector of fitted values from the dietary model of Table \ref{Tabdiet}. }
\label{Taball}
\centering
    \begin{tabular}{lccc}
\toprule
explanatory variable    & fitted slope coefficient & t statistic & two-sided p value \\
\midrule
HDI&0.58& 8.49 & $\approx10^{-13}$   \\
meat &0.11& 1.95 & 0.05  \\
milk 	&0.08& 1.50 & 0.14   \\
eggs &0.12& 2.54 & 0.01   \\
fish &0.07& 2.13 & 0.03   \\
animal fat &0.10 & 2.34 & 0.02   \\
 \bottomrule
\end{tabular}
\end{table}

It is more difficult to estimate a partial coefficient of determination with subject matter knowledge.  Suppose that subject matter knowledge has lead to a dietary model of cholesterol in terms of only meat, milk, eggs, fish, and animal fat.  This model is summarized in Table \ref{Tabdiet}.  Note the final column, where we have included absolute values of partial correlation coefficients.  These coefficients are computed as partial correlations between a given row's variable and cholesterol, given the remaining dietary data.  Calculation with residual vectors reveals, using either Corollary \ref{oneway1} or Corollary \ref{oneway2}, that HDI is not capable of inducing any reversals.  With $k$ covariates similar calculation would be done for the whole set of covariates at once.

\begin{table}[t]
\caption{A linear model of cholesterol that has not been adjusted for HDI.  Partial correlations have been computed between a given row's variable and cholesterol, given the remaining dietary variables.  Higher partial correlations indicate stability.}
\label{Tabdiet}
\centering
    \begin{tabular}{lcccc}
\toprule
variable & slope coefficient & t statistic & p value & partial correlation\\
\midrule
meat &0.35& 5.93 & $\approx10^{-7}$ & 0.44  \\
milk 	&0.22& 3.70 & 0.0003 & 0.29 \\
eggs &0.32& 6.19 & $\approx10^{-8}$ & 0.28 \\
fish &0.14& 3.59 & 0.0005 & 0.45 \\
animal fat & 0.09  & 1.77 & 0.0786 & 0.14 \\
 \bottomrule
\end{tabular}
\end{table}

\section{Discussion}
\label{dsec}
Proposition \ref{firstresult} and its corollaries have been designed for use during analysis of large data sets, especially when the goal is to estimate the direction of a causal effect of $X$ on $Y$ by adjusting for covariates.  Suppose a model of $\mathbf{y}$ has been fit to $\mathbf{x}$ and $\mathbf{w}$, and confounding by some indeterminate subset $\mathbf{s}\subseteq \mathbf{u}$ is suspected.  There are $2^k$ subsets to consider, each associated with a particular model extension, and it may not be feasible to fit all possible models.  However, by fitting a single model of $\mathbf{v}$ in terms of $\mathbf{u}$, if the $R^2$ value is small compared to $r^*$, then the technique of model-independent estimation can be implemented.  That is the content of Corollary \ref{oneway2}, and Corollary \ref{oneway1} is similar.

Related theory exists within the field of econometrics.  Here we have dealt with model extensions, while econometricians have already dealt with model contractions.  They have studied reversals by assuming a larger model and an effect of interest, along with conditions on a set of variables to be removed.  Using $t$ and $F$ statistics, \citet{Leamer75} showed how reversals can only occur if the set of variables to be dropped is more significant than the variable of interest.  \citet{Visco78} showed that this condition is not sufficient, and he also derived necessary and sufficient conditions for a reversal when only a single variable is dropped.  \citet{Oksanen87} rephrased the conditions using partial correlation.  \citet{McAleer86} and \citet{Giles89} presented generalizations.  However, using the words of \citet[p 126]{Imbens03} ``One is not interested in what would have happened in the absence of covariates actually observed, but in biases that are the result from not observing all relevant covariates.''

For example, consider smoking and lung cancer.  A simple causal graph is inadequate, because of complicated relationships between smoking, lung cancer, and confounding variables \citep[p 424]{Pearl09a}.  For instance, the US Environmental Protection Agency (EPA) lists (indoor exposure to) radon (gas) as the second leading cause of lung cancer in the United States \citep{Radonb}, and there is evidence of interaction between radon gas and smoke \citep[Appendix C, p 239]{Radona} .  It wasn't a perfect model, but rather an inequality \citep[Appendix A]{Cornfield59} that played a critical role in allowing the US Surgeon General to conclude that cigarette smoking is causally related to lung cancer in man \citep{Lin98}.   In response to Fisher's constitution hypothesis \citep{Fisher58}, Cornfield et al. stated that ``the magnitude of the excess lung-cancer risk among cigarette smokers is so great that the results can not be interpreted as arising from an indirect association of cigarette smoking with some other agent or characteristic, since this hypothetical agent would have to be at least as strongly associated with lung cancer as cigarette use; no such agent has been found or suggested.''

A limitation of reversal analysis is its emphasis on direction rather than magnitude.  There is much literature dealing more exactly with omitted variable bias.  It can be specified as a complicated matrix expression \cite[Chaper 3]{Seber03}.  It can be factored into a ratio of standard errors, an F statistic, and a partial coefficient of determination \citep{Hosman10}.  Expressions bounding the t values of the larger model can be written in terms of coefficients of determination, under certain assumptions \citep{Frank00}.  Assuming binary treatment, sensitivity can be assessed with distributional assumptions for the confounding variables, along with knowledge of how they affect the response \citep{Lin98}.  See also \citet{Rosenbaum83b}.  In general, more exact results require more detailed assumptions.  There are few assumptions underlying the analysis of reversals.  Precision has been traded for the possibility of model-independent estimation.

Analysis of reversals has produced necessary conditions for Simpson's paradox, revealed geometric symmetry within the column space of data sets, and lead to the possibility of model-independent estimation---a technique for identifying effects that are invariant across a class of models.  To determine the direction of an effect, either Corollary \ref{oneway1} or Corollary \ref{oneway2} can be applied, and only basic knowledge of $r$ and $R^2$ is required.  Note that $r$ alone is not sufficient.  Table  \ref{tab2} gives an example where $\mathbf{u}_1$ and $\mathbf{u}_2$ both correlate arbitrarily weakly with both $\mathbf{x}$ and $\mathbf{y}$, yet $[\mathbf{u}_1\,\mathbf{u}_2]$ induces a reversal.  Also, partial coefficients are required.  Table \ref{tab1} gives a related example where a single vector $\mathbf{u}$ is not correlated with $\mathbf{x}$ nor $\mathbf{y}$, yet it induces a reversal nonetheless, by activating a previously dormant $\mathbf{w}$.  Finally, even with $\mathbf{w}=\emptyset$ it is not possible to conduct model-independent estimation while retaining $r(\widehat{\mathbf{x}_{|\mathbf{w}}}(\mathbf{u}),\widehat{\mathbf{y}_{|\mathbf{w}}}(\mathbf{u}))$ in its entirety for possibly stronger logical reasoning.  Table \ref{tab3} gives an example where such theory would suggest (correctly) that a reversal is not possible due to $[\mathbf{u}_1\,\mathbf{u}_2]$, but both $\mathbf{u}_1$ and $\mathbf{u}_2$ individually lead to reversals.

\begin{table}[h]
\caption{A counterexample showing the need for $R^2$: $\hat{\beta}_{\mathbf{x}}(\mathbf{y})=r(\mathbf{x},\mathbf{y})\approx0.5$, and as $\epsilon \downarrow 0$, $r(\mathbf{u}_1,\mathbf{x})=r(\mathbf{u}_1,\mathbf{y})\downarrow 0$ and $r(\mathbf{u}_2,\mathbf{x})=r(\mathbf{u}_2,\mathbf{y})\downarrow 0$, while $R(\mathbf{u},\mathbf{y})R(\mathbf{u},\mathbf{y})\approx0.75$ and $\hat{\beta}_{\mathbf{x}|\mathbf{u}}(\mathbf{y})=-1$.}
\label{tab2}
\centering
    \begin{tabular}{cccc}
\toprule
        $\mathbf{y}$ & $\mathbf{x}$ & $\mathbf{u}_1$ & $\mathbf{u}_2$\\
\midrule
$(\sqrt{2}+3)/2$	& $(-\sqrt{2}+3)/2$ & $\epsilon/\sqrt{2}$ & $\epsilon/\sqrt{2}$ \\
$(\sqrt{2}-3)/2$	& $(-\sqrt{2}-3)/2$  & $-\epsilon/\sqrt{2}$ & $-\epsilon/\sqrt{2}$ \\
$-1/2$	                     & $1/2$                               & $1$ & $-1$ \\
$-1/2$	                     & $1/2$                                 & $-1$ & $1$ \\
 \bottomrule
\end{tabular}
\end{table}
\begin{table}[h]
\caption{A counterexample showing the need for partial coefficients: as $\delta\downarrow 0$, $\hat{\beta}_{\mathbf{x}|\mathbf{w}}(\mathbf{y})\approx0.5$, $r(\mathbf{u},\mathbf{x})=r(\mathbf{u},\mathbf{y})=0$, yet $\hat{\beta}_{\mathbf{x}|\mathbf{w},\mathbf{u}}(\mathbf{y})\approx-0.4$, while $r(\mathbf{w},\mathbf{x})=r(\mathbf{w},\mathbf{y})\downarrow 0$.}
\label{tab1}
\centering
    \begin{tabular}{cccc}
\toprule
        $\mathbf{y}$ & $\mathbf{x}$ & $\mathbf{w}$ & $\mathbf{u}$\\
\midrule
$(\sqrt{2}+3)/2$	& $(-\sqrt{2}+3)/2$ & $\delta/\sqrt{2}$ & $0$ \\
$(\sqrt{2}-3)/2$	& $(-\sqrt{2}-3)/2$  & $-\delta/\sqrt{2}$ & $0$ \\
$-1/2$	                     & $1/2$                               & $1$ & $-1$ \\
$-1/2$	                     & $1/2$                                 & $-1$ & $1$ \\
 \bottomrule
\end{tabular}
\end{table}
\begin{table}[h]
\caption{A counterexample showing how model-independent estimation is not possible with full use of $r(\widehat{\mathbf{x}_{|\mathbf{w}}}(\mathbf{u}),\widehat{\mathbf{y}_{|\mathbf{w}}}(\mathbf{u}))$ and Proposition \ref{firstresult}: $\hat{\beta}_{\mathbf{x}}(\mathbf{y})\approx0.5$, and for small, positive $\epsilon$ and $\delta$, as $(\epsilon,\delta)\to (0,0)$, $\hat{\beta}_{\mathbf{x}|\mathbf{u}_1,\mathbf{u}_2}(\mathbf{y})\to 1.0$, $\hat{\beta}_{\mathbf{x}|\mathbf{u}_1}(\mathbf{y})\to -1.0$, and $\hat{\beta}_{\mathbf{x}|\mathbf{u}_2}(\mathbf{y})\to -1.0$.}
\label{tab3}
\centering
    \begin{tabular}{cccc}
\toprule
        $\mathbf{y}$ & $\mathbf{x}$ & $\mathbf{u}_1$ & $\mathbf{u}_2$\\
\midrule
$(\sqrt{2}+3)/2$	& $(-\sqrt{2}+3)/2$ &$(\epsilon+3\sqrt{2})/2$ & $(-\epsilon+3\sqrt{2})/2$ \\
$(\sqrt{2}-3)/2$	& $(-\sqrt{2}-3)/2$  & $(\epsilon-3\sqrt{2})/2$ & $(-\epsilon-3\sqrt{2})/2$ \\
$-1/2$	                     & $1/2$                               & $(-\epsilon+\delta\sqrt{2})/2$ & $(\epsilon+\delta\sqrt{2})/2$ \\
$-1/2$	                     & $1/2$                                 & $(-\epsilon-\delta\sqrt{2})/2$ & $(\epsilon-\delta\sqrt{2})/2$ \\
 \bottomrule
\end{tabular}
\end{table}
\newpage

\bibliographystyle{Chicago}

\bibliography{bibliography}

\end{document}